\pgfplotsset{compat=1.15}
 \newcommand{\Hquad}{\hspace{0.5em}} 
\newcommand{\br}[1]{\left( #1 \right)}
\newcommand{\brsq}[1]{\left[ #1 \right]}
\newcommand{\brcur}[1]{\left\{ #1 \right\}}
\newcommand{\ar}{\bm{a}_{R}}
\newcommand{\at}{\bm{a}_{T}}
\newcommand{\RR}{\mathbb{R}}
\newcommand{\CC}{\mathbb{C}}
\newcommand{\herm}{^\mathrm{H}}
\newcommand{\expc}[2]{\mathbb{E}_{#1}\brsq{#2}}
\newcommand{\tr}[1]{\mathrm{tr}\br{#1}}
\newcommand{\pard}[2]{\frac{\partial {#1}}{\partial {#2}}}
\newcommand{\re}[1]{\mathrm{Re}\br{#1}}
\newcommand{\im}[1]{\mathrm{Im}\br{#1}}
\newcommand{\proj}[2]{\mathcal{P}_{#1}\br{#2}}
\newcommand{\fim}{\bm{\mathcal{I}}}
\newcommand{\bH}{\bm{H}}
\newcommand{\bV}{\bm{V}}
\newcommand{\bA}{\bm{A}}
\newcommand{\bS}{\bm{S}}
\newcommand{\bJ}{\bm{J}}
\newcommand{\bx}{\bm{x}}
\newcommand{\bs}{\bm{s}}
\newcommand{\by}{\bm{y}}
\newcommand{\bY}{\bm{Y}}
\newcommand{\bz}{\bm{z}}
\newcommand{\bC}{\bm{C}}
\newcommand{\bW}{\bm{W}}
\newcommand{\bT}{\bm{T}}
\newcommand{\bR}{\bm{R}}
\newcommand{\bLam}{\bm{\Lambda}}
\newcommand{\bzero}{\bm{0}}
\newcommand{\eye}{\bm{I}}
\newcommand{\nt}{N_t}
\newcommand{\nr}{N_r}
\newcommand{\ns}{N_s}
\newcommand{\bb}{\bm{b}}
\newcommand{\btau}{\bm{\tau}}
\renewcommand{\bf}{\bm{f}}
\newcommand{\bthe}{\bm{\theta}}
\newcommand{\bphi}{\bm{\phi}}
\newcommand{\bxi}{\bm{\xi}}
\newcommand{\bzeta}{\bm{\zeta}}
\newcommand{\bmu}{\bm{\mu}}
\newcommand{\calS}{\mathcal{S}}
\newcommand{\hf}{\hat{f}}
\newcommand{\cL}{\mathcal{L}}
\newcommand{\opt}{\mathrm{opt}}
\newcommand{\grad}{\mathrm{grad}}
\newacronym{mimo}{MIMO}{multiple-input multiple-output}
\newacronym{ofdm}{OFDM}{orthogonal frequency-division multiplexing}
\newacronym{5g}{5G}{fifth generation}
\newacronym{6g}{6G}{sixth generation}
\newacronym{b5g}{B5G}{Beyond 5G}
\newcommand{\mimoofdm}{\ac{mimo}-\ac{ofdm} }
\newacronym{crlb}{CRB}{Cram\'er-Rao bound}
\newacronym{bcrb}{BCRB}{Bayesian Cram\'er-Rao bound}
\newacronym{bfim}{BFIM}{Bayesian Fisher information matrix}
\newacronym{repms}{REPMS}{Riemannian Exact Penalty Method via Smoothing}
\newacronym{srepms}{SREPMS}{Stochastic Riemannian Exact Penalty Method via Smoothing}
\newacronym{sdr}{SDR}{semidefinite relaxation}
\newacronym{sdp}{SDP}{semidefinite programming}
\newacronym{mmwave}{mmWave}{millimeter wave}
\newacronym{isac}{ISAC}{integrated sensing and communications}
\newacronym{aoa}{AoA}{angle of arrival}
\newacronym{aod}{AoD}{angle of departure}
\newacronym{qos}{QoS}{Quality of Service}
\newacronym{th}{THz}{Terahertz}
\newacronym{dof}{DoFs}{degrees of freedom}
\newacronym{iot}{IoT}{Internet of Things}
\newacronym{uav}{UAV}{unmanned aerial vehicle}
\newacronym{tx}{Tx}{transmitter}
\newacronym{rx}{Rx}{receiver}
\newacronym{ula}{ULA}{uniform linear array}
\newacronym{dft}{DFT}{Discrete Fourier Transform}
\newacronym{re}{RE}{resource element}
\newacronym{dl}{DL}{Downlink}
\newacronym{ul}{UL}{Uplink}
\newacronym{psd}{PSD}{positive semidefinite}
\newacronym{mse}{MSE}{mean squared error}
\newacronym{svd}{SVD}{singular value decomposition}
\newacronym{cpu}{CPU}{central processing unit}
\newacronym{mui}{MUI}{multi-user interference}
\newacronym{mi}{MI}{mutual information}
\newacronym{srmo}{SRMO}{stochastic Riemannian manifold optimization}
\newacronym{srgd}{SRGD}{stochastic Riemannian gradient descent}
\newacronym{srcg}{SRCG}{stochastic Riemannian conjugate gradient}
\newacronym{kkt}{KKT}{Karush–Kuhn–Tucker}
\newacronym{as}{a.s.}{almost sure}
\newacronym{papr}{PAPR}{peak to average power ratio}
\newtheorem{theorem}{Theorem}
\newtheorem{corollary}{Corollary}
\newcommand{\changedd}[1]{\textcolor{black}{#1}}
\def\BibTeX{{\rm B\kern-.05em{\sc i\kern-.025em b}\kern-.08em
    T\kern-.1667em\lower.7ex\hbox{E}\kern-.125emX}}
\IEEEoverridecommandlockouts\IEEEpubid{\makebox[\columnwidth]{ 978-1- 979-8-3503-1090-0/23/\$31.00~\copyright~2023 European Union\hfill} \hspace{\columnsep}\makebox[\columnwidth]{ }}
\def\ps@IEEEtitlepagestyle{%
  \def\@oddfoot{\mycopyrightnotice}%
  \def\@oddhead{\hbox{}\@IEEEheaderstyle\leftmark\hfil\thepage}\relax
  \def\@evenhead{\@IEEEheaderstyle\thepage\hfil\leftmark\hbox{}}\relax
  \def\@evenfoot{}%
}
\def\mycopyrightnotice{%
  \begin{minipage}{\textwidth}
  \centering \scriptsize
  Copyright~\copyright~20XX IEEE.  Personal use of this material is permitted.  Permission from IEEE must be obtained for all other uses, in any current or future media, including reprinting/republishing this material for advertising or promotional purposes, creating new collective works, for resale or redistribution to servers or lists, or reuse of any copyrighted component of this work in other works.
  \end{minipage}
}
\begin{document}

\title{Optimal Linear Precoder Design for MIMO-OFDM Integrated Sensing and Communications Based on Bayesian Cram\'er-Rao Bound
\thanks{
    The authors were supported in part by the
    German Federal Ministry of Education and Research (BMBF)
    in the programme “Souver\"an. Digital. Vernetzt.”
    within the research hub 6G-life under Grant 16KISK002.
    U. M\"onich and H. Boche were supported in part by the BMBF within the project "Post Shannon Communication - NewCom" under Grant 16KIS1003K.
    }
}

\author{\IEEEauthorblockN{Xinyang Li\IEEEauthorrefmark{1}\IEEEauthorrefmark{2},
Vlad C. Andrei\IEEEauthorrefmark{1}\IEEEauthorrefmark{3}, Ullrich J. M\"onich\IEEEauthorrefmark{1}\IEEEauthorrefmark{4} and
Holger Boche\IEEEauthorrefmark{1}\IEEEauthorrefmark{5}}
\IEEEauthorblockA{\IEEEauthorrefmark{1}Chair of Theoretical Information Technology, Technical University of Munich, Munich, Germany\\
\IEEEauthorrefmark{1}BMBF Research Hub 6G-life,
\IEEEauthorrefmark{5}Munich Center for Quantum Science and Technology,
\IEEEauthorrefmark{5}Munich Quantum Valley\\
Email: \IEEEauthorrefmark{2}xinyang.li@tum.de,
\IEEEauthorrefmark{3}vlad.andrei@tum.de,
\IEEEauthorrefmark{4}moenich@tum.de,
\IEEEauthorrefmark{5}boche@tum.de}
}

\maketitle

\begin{abstract}
In this paper, we investigate the fundamental limits of MIMO-OFDM \ac{isac} systems based on a \ac{bcrb} analysis. We derive the \ac{bcrb} for joint channel parameter estimation and data symbol detection, in which a performance trade-off between both functionalities is observed. We formulate the optimization problem for a linear precoder design and propose the \ac{srgd} approach to solve the non-convex problem. We analyze the optimality conditions and show that \ac{srgd} ensures convergence with high probability. The simulation results verify our analyses and also demonstrate a fast convergence speed. Finally, the performance trade-off is illustrated and investigated.
\end{abstract}

\begin{IEEEkeywords}
Integrated sensing and communications, MIMO-OFDM, Bayesian Cram\'er-Rao bound
\end{IEEEkeywords}

\glsresetall

\section{Introduction}
\changedd{With the increasing demand for massive device connection and high-speed communications,} current wireless networks face many challenges, such as interference and spectrum sparsity. To overcome these problems, \ac{isac} is envisioned as a key technology in next-generation wireless systems\cite{isacsurvy}. Due to similar platform requirements and signal processing algorithms\cite{Overview}, it becomes more and more desirable to merge both functionalities to further explore performance potential and reduce hardware cost. Additionally, \ac{isac} facilitates coordination between both functionalities through mutual assistance to improve individual performance such as accuracy, robustness, and security\cite{isacsecurity}.

\Ac{mimo} and \ac{ofdm} are both powerful techniques that enable the efficient utilization of spectral and spatial resources. They have been standardized in most wireless systems such as Wi-Fi and 5G, and are still expected to be important signaling techniques in the next-generation systems, e.g., Beyond 5G and 6G\cite{holger}. Therefore, implementing \ac{isac} in \mimoofdm systems is of great interest and is able to be adopted in most applications, e.g., \ac{iot}\cite{iot} and perceptive mobile networks\cite{perceptive}.

Waveform design plays a crucial role in \ac{isac} systems as it directly impacts the theoretical system performance. Designing waveforms with favorable properties usually involves solving optimization problems that are formulated based on certain criteria. For example, the authors in \cite{papr} formulate the design problem to minimize the multi-user interference while maintaining the radar beampattern and \ac{papr} constraints, and in\cite{optimized, Li2305:Optimal} the sensing symbols are designed and allocated to an \ac{ofdm} grid to lower the \ac{crlb} of sensing parameters. However, \changedd{the fundamental limits for \ac{isac} are still not well studied\cite{fundamental},} due to the different performance requirements of both functionalities. Moreover, the formulated problems are usually non-convex and are therefore difficult to solve. In the cases where suboptimal solutions are targeted, convergence guarantees and the speed of the adopted algorithms are both critical factors to be considered.

In this work, we investigate the \mimoofdm \ac{isac} performance from the perspective of \ac{bcrb}, which generalizes the \ac{crlb} by considering the prior distribution of parameters being estimated. In particular, we consider \ac{isac} to be a task of joint channel parameter estimation and data symbol detection and derive its \ac{bfim}, the inverse of the \ac{bcrb}. We point out the performance trade-off between both functionalities by taking the log determinant of the \ac{bfim}. Based on that, the optimization problem to design the linear precoder is formulated. We propose solving the non-convex stochastic problem using the \ac{srgd} method and prove that its convergence ensures the optimality conditions. Simulation results\footnote{Code is available at \url{https://github.com/xinyanglii/isac-mimo-ofdm-wf}.} highlight the advantages of \ac{srgd} and validate our analyses.

\section{System Model}
\subsection{Signal Model}

We study a \mimoofdm system where the \ac{tx} and the \ac{rx} each have a \ac{ula} with $\nt$ and $\nr$ antenna elements, respectively. The \ac{tx} allocates a set of $M$ \acp{re} $\{(n_m,k_m)\}_{m=1}^M$ for data transmission, in which we denote $(n_m,k_m)$ as the index of the $n_m$-th subcarrier and the $k_m$-th \ac{ofdm} symbol. These \acp{re} are filled with $M$ unknown data symbols $\bs_m\in\CC^{\ns}$ with $\ns$ being the number of streams. We assume that the symbols follow a complex Gaussian distribution with zero mean and an identity covariance matrix, i.e., $\bs_m \sim \mathcal{CN}(\bzero,\eye_{\ns})$ for all $m$. By applying a linear precoder $\bW\in \CC^{\nt \times \ns}$, the data symbols are mapped to the $\nt$ antennas via $\bx_m = \bW \bs_m$ for transmission, and the received signal \changedd{after going through the channel can be expressed} as\cite{Overview}
\begin{equation}
    \by_m = \bH_m\bx_m + \bz_m,
\end{equation}
where $\bH_m\in\CC^{\nr\times\nt}$ is the channel matrix on the \mbox{$m$-th} \ac{re} and $\bz_m \in \CC^{\nr}$ is complex white Gaussian noise following $\mathcal{CN}(\bm{0}, \sigma_z^2\eye_{\nr})$ for all $m$. By considering the sparse propagation paths in the high frequency range\cite{mmwave}, the channel matrix characterized by multipath parameters takes the form
\begin{equation}
    \bH_m(\bxi) = \sum_{l=1}^Lb_l\omega_{m,l}\ar(\phi_l)\at(\theta_l)^\top, \label{eq:channel}
\end{equation}
where $\omega_{m,l} = e^{-j2\pi n_mf_0 \tau_l}e^{j2\pi f_{D,l}k_mT_s}$,  $f_0$ is the \ac{ofdm} subcarrier spacing, $T_s$ is the \ac{ofdm} symbol duration, $L$ is the number of propagation paths, $b_l$, $\tau_l$, $f_{D,l}$, $\theta_l$, $\phi_l$ are the channel gain, path delay, Doppler shift, \ac{aod} and \ac{aoa} of the $l$-th path, $\at(\cdot) \in \CC^{N_T}$ and $\ar(\cdot)\in \CC^{N_R}$ are the transmit and receive \ac{ula} response vector, respectively. $\bxi$ contains all the multipath parameters and is defined as $\bxi = [\bb_R^\top, \bb_I^\top, \btau^\top, \bf_D^\top, \bthe^\top, \bphi^\top]^\top \in \RR^{6L}$, with each component grouping the respective parameters of all $L$ paths. To make $\bxi$ a real vector, we split the path gains into real and imaginary parts and stack them into $\bb_R$ and $\bb_I$. 

As one of the scenarios in \ac{isac}, the \ac{rx} aims to estimate channel parameters and detect data symbols jointly. Toward this goal, we utilize the \ac{bcrb}\cite{bcrb, van2007bayesian}, which states a lower bound for the \ac{mse} for any estimator. 

\subsection{Bayesian Fisher Information Matrix}
We stack the channel parameters and data symbols into a single vector as $\bzeta = [\bxi^\top, \tilde{\bs}_1^\top, \cdots, \tilde{\bs}_M^\top]^\top \in \RR^{6L + 2N_sM}$, where $\tilde{\bs}_m = [\bs_{m,R}^\top, \bs_{m,I}^\top]^\top$, $\bs_m = \bs_{m,R} + j\bs_{m,I}$. The \ac{isac} system can then be considered as a system with input $\bzeta$, and output $\bY = [\by_1, \by_2, ... ,\by_M]$ with the joint distribution
\begin{equation}
\begin{split}
    p(\bY,\bzeta) &= p(\bY|\bzeta)p(\bzeta) = p(\bzeta)\prod_{m=1}^{M} p(\by_m|\bzeta).
\end{split}\label{eq:dist}
\end{equation} 
For any estimator $\hat{\bzeta}(\bY)$ of $\bzeta$ and the resulting \ac{mse}, we have the \ac{bcrb} as its lower bound, i.e., 
\begin{equation}
\label{eq:msebound}
    \expc{}{\br{\bzeta-\hat{\bzeta}(\bY)}\br{\bzeta-\hat{\bzeta}(\bY)}^\top} \succeq \fim^{-1}
\end{equation}
where $\bA\succeq\bm{B}$ indicates $\bA-\bm{B}$ is a \ac{psd} matrix, $\fim$ is the \ac{bfim} and is defined as\cite{mutual}
\begin{equation}
    \fim = \expc{\bY, \bzeta}{\pard{\log p(\bY,\bzeta)}{\bzeta} \br{\pard{\log p(\bY,\bzeta)}{\bzeta}}^\top}.
\end{equation}
By making use of \eqref{eq:dist}, the \ac{bfim} can be decomposed into two parts, i.e.,
\begin{equation}
\begin{split}
    \fim &= \expc{\bzeta}{\fim_c} + \fim_p,
\end{split}
\end{equation}
with $\fim_c$ and $\fim_p$ denoting the conditional and prior FIM respectively and taking the forms
\begin{equation}
\begin{split}
    \fim_c &= \expc{\bY|\bzeta}{\pard{\log p(\bY| \bzeta)}{\bzeta} \br{\pard{\log p(\bY| \bzeta)}{\bzeta}}^\top},\\
    \fim_p &= \expc{\bzeta}{\pard{\log p(\bzeta)}{\bzeta} \br{\pard{\log p(\bzeta)}{\bzeta}}^\top}.
\end{split}
\end{equation}

We assume a Gaussian prior distribution $p(\bzeta)=\mathcal{N} (\bmu_{\bzeta}, \bC_{\bzeta})$, where $\bmu_{\bzeta}$ and $\bC_{\bzeta}$ can be chosen practically. For instance, when the sequential estimation and tracking are applied, the prior mean of channel parameters $\bmu_{\bxi}$ can be set to the previous state, and their prior covariance matrix $\bC_{\bxi}$ contains the uncertainty in them. Furthermore, we have $\tilde{\bs}_m$ following $\mathcal{N}(\bzero, \frac{1}{2}\eye_{2\ns})$, and assume $\tilde{\bs}_m$ and $\bxi$ are independent, such that $\bC_{\bzeta}$ is a block diagonal matrix with $\bC_{\bxi}$ and $\frac{1}{2}\eye_{2\ns}$ on the diagonal. The conditional distribution $p(\by_m|\bzeta)$ is a complex Gaussian distribution with mean $\bmu_m = \bH_m \bx_m$ and covariance matrix $\sigma_z^2\eye_{\nr}$. Hence, we have\cite{kay1993fundamentals}
\begin{equation}
    \fim_c= \frac{2}{\sigma_z^2}\sum_{m=1}^M \re{\pard{\bmu_m\herm}{\bzeta}\pard{\bmu_m}{\bzeta}}.  \label{eq:covexpc}
\end{equation}
It can also be shown that\cite{fundamental}
\begin{equation}
\begin{split}
    \fim_c = \begin{bmatrix}
        \fim^s_c(\bxi, \bxi) & \fim^s_c(\bxi, \tilde{\bs}_1) & \cdots &  \fim^s_c(\bxi, \tilde{\bs}_M) \\
        \fim^s_c(\tilde{\bs}_1, \bxi) & \fim^s_c(\tilde{\bs}_1, \tilde{\bs}_1) &  \cdots & \fim^s_c(\tilde{\bs}_1, \tilde{\bs}_M) \\
        \vdots & \vdots & \ddots & \vdots\\
        \fim^s_c(\tilde{\bs}_M, \bxi) & \fim^s_c(\tilde{\bs}_M, \tilde{\bs}_1) & \cdots &  \fim^s_c(\tilde{\bs}_M, \tilde{\bs}_M)
    \end{bmatrix}
\end{split} \label{eq:fimcomp}
\end{equation}
with the arguments inside the bracket of the submatrix $\fim^s_c(\cdot,\cdot)$ denoting to which components of $\bzeta$ the derivatives in \eqref{eq:covexpc} are taken. In \cite{Li2305:Optimal} it is derived that
\begin{align}
    \pard{\bm{\mu}_m}{\bm{\xi}} &= (\bm{R} \bm{\Lambda}_{m}) * (\bx_m^\top\bm{T})\\
    \fim^s_c(\bxi, \bxi) &= \frac{2}{\sigma_z^2}\sum_{m=1}^M\re{ \bm{\Lambda}_m\herm \bT\herm \bx_m^*\bx_m^\top \bT\bLam_m\circ \br{\bR\herm \bR}},
\end{align}
where we define $*$ and $\circ$ as the Khatri-Rao product and the Hadamard product, respectively, and the related matrices are given in \eqref{eq:fimparam}. In addition,
\begin{equation}
    \pard{\bm{\mu}_m}{\tilde{\bm{s}}_{m'}} = \begin{cases}
    \begin{bmatrix}
        \bH_m & \bH_m
    \end{bmatrix}\begin{bmatrix}
        \bm{W} & \\
        & j\bm{W}
    \end{bmatrix}= \tilde{\bm{H}}_m\tilde{\bm{W}}, & m=m'\\
    \bzero, & m\neq m'
    \end{cases}
\end{equation}
\begin{equation}
    \fim^s_c(\tilde{\bm{s}}_m, \tilde{\bm{s}}_m) = \frac{2}{\sigma_z^2}\re{\tilde{\bm{W}}\herm\tilde{\bm{H}}_m\herm\tilde{\bm{H}}_m\tilde{\bm{W}}}
\end{equation}
\begin{figure*}[ht]
    \begin{equation}
    \begin{split}
        \bm{\Lambda}_m &= \text{blkdiag}\brcur{\bm{\Omega}_m, j\bm{\Omega}_m, \bm{G}_m\bm{B}, \bm{F}_m\bm{B}, \bm{\Omega}_m\bm{B}, \bm{\Omega}_m\bm{B}},\quad \bm{B} = \text{diag}\brcur{b_1\cdots b_L}, \quad \bm{\Omega}_m = \text{diag}\brcur{\omega_{m,1} \cdots \omega_{m,L}}\\
        \bm{G}_m &= \text{diag}\brcur{u_{m,1}\cdots u_{m,L}},\Hquad \bm{F}_m = \text{diag}\brcur{v_{m,1}\cdots v_{m,L}},\Hquad  u_{m,l} = \pard{\omega_{m,l}}{\tau_l} = -j2\pi nf_0\omega_{m,l},\Hquad v_{m,l} = \pard{\omega_{m,l}}{f_{D,l}} = j2\pi kT_s\omega_{m,l},\\ 
        \bm{T} &= \brsq{\bm{A}_T, \bm{A}_T, \bm{A}_T, \bm{A}_T, \bm{A}_T, \bm{D}_T},\quad \bm{A}_T = \brsq{\bm{a}_T(\theta_1) \cdots \bm{a}_T(\theta_L)},\quad \bm{D}_T = \brsq{\bm{d}_T(\theta_1) \cdots \bm{d}_T(\theta_L)}, \quad \bm{d}_T(\theta_l) = \pard{\at(\theta_l)}{\theta_l},\\
        \bm{R} &= \brsq{\bm{A}_R, \bm{A}_R, \bm{A}_R, \bm{A}_R, \bm{D}_R, \bm{A}_R}, \quad \bm{A}_R = \brsq{\bm{a}_R(\phi_1) \cdots \bm{a}_R(\phi_L)}, \quad \bm{D}_R = \brsq{\bm{d}_R(\phi_1) \cdots \bm{d}_R(\phi_L)}, \quad \bm{d}_R(\phi_l) = \pard{\ar(\phi_l)}{\phi_l}. 
    \end{split}
    \label{eq:fimparam}
\end{equation}
\hrulefill
\end{figure*}

Therefore, taking the expectation with respect to each component in \eqref{eq:fimcomp} leads to
\begin{equation}
\begin{split}
    &\expc{\bm{\zeta}}{\fim^s_c(\bm{\xi}, \bm{\xi})}\\ 
    &\Hquad= \frac{2}{\sigma_z^2}\sum_{m=1}^M\expc{\bm{\xi}}{\re{ \bm{\Lambda}_m\herm \bm{T}\herm \expc{\bs_m}{\bm{x}_m^*\bm{x}_m^\top} \bm{T\Lambda}_m}\circ \br{\bm{R}\herm \bm{R}}}\\
    &\Hquad= \frac{2}{\sigma_z^2}\sum_{m=1}^M\mathbb{E}_{\bm{\xi}}\biggl[\underbrace{\re{ \bm{\Lambda}_m\herm \bm{T}\herm \bm{W}^*\bm{W}^\top \bm{T\Lambda}_m}\circ \br{\bm{R}\herm \bm{R}}}_{\bm{S}_{m}(\bm{W},\bm{\xi})}\biggr],
\end{split}
\label{eq:Sm}
\end{equation}
\begin{equation}
\begin{split}
    &\expc{\bm{\zeta}}{\fim^s_c(\tilde{\bm{s}}_m, \tilde{\bm{s}}_{m'})}\\
    &\Hquad= \begin{cases}
        \frac{2}{\sigma_z^2}\re{\tilde{\bm{W}}\herm\expc{\bm{\xi}}{\tilde{\bm{H}}_m\herm\tilde{\bm{H}}_m}\tilde{\bm{W}}}, & m=m'\\
        \bm{0},& m\neq m',
    \end{cases} \\
\end{split}\label{eq:fimc}
\end{equation}
\begin{equation}
\begin{split}
    &\expc{\bm{\zeta}}{\fim^s_c(\bm{\xi}, \tilde{\bm{s}}_m)}\\
    &\Hquad = \frac{2}{\sigma_z^2}\sum_{m=1}^M\expc{\bm{\zeta}}{\re{(\bm{R} \bm{\Lambda}_{m}) * (\bx_m^\top\bm{T}) \tilde{\bm{H}}_m\tilde{\bm{W}}}}\\
    & \Hquad= \bm{0},
\end{split}
\end{equation}
where the last step is due to the mean of $\bx_m$ being zero. The second term $\fim_p$ is straightforward to be obtained as
\begin{equation}
    \fim_p = \bm{C}_{\bm{\zeta}}^{-1} = \begin{bmatrix}
        \bm{C}_{\bm{\xi}}^{-1} & \bm{0} \\
        \bm{0} & 2\bm{I}_{2N_sM}
    \end{bmatrix}.
\end{equation}
Consequently, the \ac{bfim} is a block diagonal matrix with nonzero matrices $\expc{\bm{\zeta}}{\fim^s_c(\bm{\xi}, \bm{\xi})} + \bm{C}_{\bm{\xi}}^{-1}$ and $\expc{\bm{\zeta}}{\fim^s_c(\tilde{\bm{s}}_m, \tilde{\bm{s}}_{m})} + 2\bm{I}_{2N_s}$ for all $m$ on the diagonal.

\section{Problem Formulation}
The \ac{bfim} can be optimized using different criteria\cite{maxdet}. In this work, we choose the log determinant as the objective function, which results in 
\begin{equation}
\begin{split}
    \log\det\br{\fim} &= \log\det\br{\expc{\bm{\zeta}}{\fim^s_c(\bm{\xi}, \bm{\xi})} + \bm{C}_{\bm{\xi}}^{-1}}\\
    &+ \sum_{m=1}^M\log\det\br{\expc{\bm{\zeta}}{\fim^s_c(\tilde{\bm{s}}_m, \tilde{\bm{s}}_{m})} + 2\bm{I}_{2N_s}}.
    \label{eq:obj1}
\end{split}
\end{equation}
By plugging the expression of $\expc{\bm{\zeta}}{\fim^s_c(\tilde{\bm{s}}_m, \tilde{\bm{s}}_{m})}$ from~\eqref{eq:fimc} into the second term, we have
\begin{equation}
\begin{split}
    &\log\det\br{\expc{\bm{\zeta}}{\fim^s_c(\tilde{\bm{s}}_m, \tilde{\bm{s}}_{m})} + 2\bm{I}_{2N_s}} \\
    &\quad=\log\det\br{\frac{2}{\sigma_z^2}\re{\tilde{\bm{W}}\herm\expc{\bm{\xi}}{\tilde{\bm{H}}_m\herm\tilde{\bm{H}}_m}\tilde{\bm{W}}}+ 2\bm{I}_{2N_s}}\\
    &\quad =\log\det \br{\re{\begin{bmatrix}
        \bA + 2\bm{I}_{N_s}& j\bA\\
        -j\bA & \bA + 2\bm{I}_{N_s}
    \end{bmatrix}}}\\
    & \quad =\log\det \br{\begin{bmatrix}
        \re{\bA + 2\bm{I}_{N_s}} & -\im{\bA+ 2\bm{I}_{N_s}}\\
        \im{\bA+ 2\bm{I}_{N_s}} & \re{\bA + 2\bm{I}_{N_s}}
    \end{bmatrix}}\\
    &\quad=2\log\det \br{\bA+ 2\bm{I}_{N_s}}\\
    &\quad=2\log\det \br{\frac{2}{\sigma_z^2}\bm{W}\herm\expc{\bm{\xi}}{\bH_m\herm\bH_m}\bm{W}+ 2\bm{I}_{N_s}}, \label{eq:mi}
\end{split}
\end{equation}
with $\bA = \frac{2}{\sigma_z^2}\bm{W}\herm\expc{\bm{\xi}}{\bH_m\herm\bH_m}\bm{W}$. It's not surprising that \eqref{eq:mi} has a similar form with the mutual information between $\bs_m$ and $\by_m$. In fact, if we omit the factor $2$ inside and outside the $\log\det$, \eqref{eq:mi} is an upper bound on the ergodic mutual information between $\bs_m$ with $\by_m$, since
\begin{equation}
\begin{split}
    \expc{\bm{\xi}}{I(\bs_m, \by_m)} &= \expc{\bm{\xi}}{\log\det \br{\frac{1}{\sigma_z^2}\bm{W}\herm\bH_m\herm\bH_m\bm{W} + \bm{I}_{N_s}}}\\
    &\le \log\det \br{\frac{1}{\sigma_z^2}\bm{W}\herm\expc{\bm{\xi}}{\bH_m\herm\bH_m}\bm{W}+ \bm{I}_{N_s}}
\end{split}
\end{equation}
using the Jensen inequality, and $I(\cdot,\cdot)$ denotes the mutual information. The first term in~\eqref{eq:obj1} is nothing else than the $\log\det$ of the \ac{bfim} for channel parameter estimation. Therefore, we observe a performance trade-off between both functionalities in which the communication quality is measured in terms of the upper bound of the ergodic mutual information, channel sensing is assessed by the parameter estimation error bound, and both are bridged through the \ac{bcrb} of joint parameter estimation and symbol detection. By introducing an adjustable weighting factor to~\eqref{eq:obj1}, we are able to take over the control of such a trade-off while optimizing.

We also take into consideration different importance levels of different channel parameters in practice. To do this, we multiply the \ac{bfim} for the parameter part by a weighting matrix $\bJ$\cite{maxdet}. Therefore, the resulting optimization problem based on~\eqref{eq:obj1} is formulated as
\begin{equation}
\begin{split}
    &\max_{\bW}\Hquad f(\bm{W};\alpha) =\max_{\bW}\Hquad \alpha f_s(\bW) + (1-\alpha) \frac{1}{M} f_c(\bW) \\
    =& \max_{\bm{W}} \alpha\log\det \br{ \bm{J}\herm\br{\frac{2}{\sigma_z^2}\sum_{m=1}^M \expc{\bm{\xi}}{\bS_m(\bW,\bxi)} + \bm{C}_{\bm{\xi}}^{-1}}\bm{J}}\\
    &\Hquad +  \frac{1 - \alpha}{M}\sum_{m=1}^M 2\log\det \br{\frac{2}{\sigma_z^2}\bm{W}\herm\expc{\bm{\xi}}{\bH_m\herm\bH_m}\bm{W}+ 2\bm{I}}\\
    &\Hquad \text{s. t. }\quad \tr{\bm{WW}\herm} = \|\bm{W}\|_F^2 \le P,
\end{split}
\tag{P1} \label{P1}
\end{equation}
with $\alpha\in [0,1]$ the trade-off factor, $\bS_m(\bW,\bxi)$ given in \eqref{eq:Sm}, and $f_s(\bW)$, $f_c(\bW)$ indicating performance metrics for sensing and communications, respectively. To scale both quantities to the same order, $f_c(\bW)$ is multiplied by $\frac{1}{M}$. Such a problem is difficult to solve for two reasons. The first one is that the expectation in the objective function is intractable to derive. The stochastic optimization methods\cite{hannah2015stochastic} can be applied in which the expectation is approximated by sampling $\bxi$ from the prior distribution, as will be discussed later. The other reason is that it is a non-convex problem if $\bW$ is a tall matrix, i.e., $N_s < N_t$. This issue can be addressed partly by \ac{sdr}\cite{sdr}, but as $N_t$ and the number of sampling points become larger, \ac{sdr} suffers from higher computational costs and becomes infeasible to solve in practice\cite{Li2305:Optimal}. To this end, we observe that $f(\bW;\alpha)$ is maximized only if the transmit power $P$ is exhausted due to the monotonicity of $f$ in the scaling of $\bW$, which leads to the complex hypersphere manifold
\begin{equation}
    \calS=\brcur{\bm{W}\in\CC^{N_t\times N_s} \middle| \tr{\bm{WW}\herm}=P}.
    \label{eq:sphere}
\end{equation}
Hence, \eqref{P1} can be solved using Riemannian manifold optimization methods\cite{absil2009optimization} on $\calS$. Taking into account the stochastic approximation and defining
\begin{equation}
\begin{split}
    \bm{S}_{m}^N(\bm{W}) &= \frac{1}{N}\sum_{n=1}^N\bm{S}_{m}(\bm{W},\bm{\xi}_n) \approx \expc{\bm{\xi}}{\bm{S}_{m}(\bm{W},\bm{\xi})}\\
    \bm{K}_m^N &= \frac{1}{N}\sum_{n=1}^N \bH_m(\bm{\xi}_n)\herm\bH_m(\bm{\xi}_n) \approx \expc{\bm{\xi}}{\bH_m\herm\bH_m},
\end{split}
\end{equation}
with $N$ the number of sampling points, and $\bxi_n$ the $n$-th sampled channel parameters, we end up with the final \ac{srmo} problem
\begin{equation}
\begin{split}
    &\max_{\bW\in\calS}\Hquad \hf(\bm{W};\alpha) =\max_{\bW}\Hquad \alpha\hf_s(\bW) + (1-\alpha)\frac{1}{M}\hf_c(\bW)\\
    =& \max_{\bW\in\calS} \alpha\log\det \br{ \bm{J}\herm\br{\frac{2}{\sigma_z^2}\sum_{m=1}^M \bm{S}_{m}^N(\bm{W}) + \bm{C}_{\bm{\xi}}^{-1}}\bm{J}} \\
    &\Hquad+  \frac{1-\alpha}{M}\sum_{m=1}^M 2\log\det \br{\frac{2}{\sigma_z^2}\bm{W}\herm\bm{K}_m^N\bm{W}+ 2\bm{I}}
\end{split}\tag{PM} \label{PM}
\end{equation}
where we use $\hat{(\cdot)}$ to indicate the stochastic version of a function. \eqref{PM} can be solved efficiently by applying \ac{srgd}. In the following, we analyze the optimality conditions for the formulated problems and show that \ac{srgd} ensures the convergence to a critical point of \eqref{P1}.

\section{Optimization}
To state the optimality conditions for the constrained problems above, we define the Lagrangian function as
\begin{equation}
\label{eq:lag}
    \cL(\bW,\lambda) = -f(\bW;\alpha) + \lambda \br{\tr{\bW\bW\herm}-P},
\end{equation}
with $\lambda$ being the Lagrangian multiplier. Let $\nabla$ denote the complex gradient\cite{petersen2008matrix} in the Euclidean space, if we replace the power constraint in \eqref{P1} with equality, we have the following first-order optimality conditions\cite{wright1999numerical}:
\begin{theorem}[First-Order Necessary Conditions]
\label{fonc}
If $\bW_\opt$ is a local optimum of \eqref{P1}, then there exists a Lagrangian multiplier $\lambda_\opt$ such that
\begin{align}
    \nabla \cL(\bW_\opt,\lambda_\opt) &= \bzero, \label{eq:lagder}\\
    \tr{\bW_\opt\bW_\opt\herm}-P &= 0.
\end{align}
\end{theorem}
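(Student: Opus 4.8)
The plan is to treat Theorem~\ref{fonc} as a standard Lagrange-multiplier (KKT) statement for an equality-constrained smooth program, where the reduction from the inequality $\tr{\bW\bW\herm}\le P$ to the equality $\tr{\bW\bW\herm}=P$ is supplied by the monotonicity of $f$ in the scaling of $\bW$ that was already noted before \eqref{eq:sphere}. Concretely, I would (i) show the power constraint is active at any local optimum, which gives the second condition; (ii) verify a constraint qualification for the single equality constraint; and (iii) invoke the classical first-order necessary conditions to obtain the stationarity of the Lagrangian \eqref{eq:lag}, which is the first condition.

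First I would establish activeness. Suppose, for contradiction, that $\bW_\opt$ is a local optimum with $\tr{\bW_\opt\bW_\opt\herm}<P$, and consider the scaled point $c\bW_\opt$ for $c>1$ close to $1$, which remains feasible. Scaling sends each quadratic form $\expc{\bxi}{\bS_m(\bW_\opt,\bxi)}$ and $\bW_\opt\herm\expc{\bxi}{\bH_m\herm\bH_m}\bW_\opt$ to $c^2$ times itself, so inside both $f_s$ and $f_c$ a \ac{psd} matrix growing in $c^2$ is added to a fixed positive definite matrix ($\bm{C}_{\bxi}^{-1}$, respectively $2\bm{I}$). Since $\log\det$ is strictly increasing on the positive definite cone, each term, and hence $f(c\bW_\opt;\alpha)$, strictly exceeds $f(\bW_\opt;\alpha)$, contradicting local optimality. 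Therefore $\tr{\bW_\opt\bW_\opt\herm}=P$, which is exactly the second condition, and the feasible set collapses to the sphere $\calS$ of \eqref{eq:sphere}.

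Next I would check the constraint qualification and apply the theorem. The (Wirtinger) complex gradient of $g(\bW)=\tr{\bW\bW\herm}-P$ is proportional to $\bW$, which is nonzero at every feasible point because $P>0$ forces $\bW\neq\bzero$; for a single equality constraint this is precisely linear independence of the active-constraint gradient, so the constraint qualification holds throughout $\calS$. The standard first-order necessary conditions for equality-constrained optimization then guarantee a multiplier $\lambda_\opt$ with $\nabla\cL(\bW_\opt,\lambda_\opt)=\bzero$, i.e. $\nabla f(\bW_\opt;\alpha)$ is parallel to $\bW_\opt$. The main obstacle I anticipate is the bookkeeping in the complex domain: $f$ is real-valued but depends on both $\bW$ and $\bW^*$, so stationarity must be phrased through the complex gradient $\nabla$, and one must confirm that differentiating the $\log\det$ terms of $f_s$ and $f_c$ through the Hermitian arguments reproduces a relation of the form $\nabla f(\bW_\opt;\alpha)=\lambda_\opt\,\bW_\opt$ (up to the convention-dependent scalar in the gradient of the power term). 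Once the strict monotonicity of the $\log\det$ terms and this gradient computation are secured, the result is a direct instance of the classical theorem.
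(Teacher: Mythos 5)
Your proposal is correct and follows essentially the same route as the paper, which justifies the equality form of the constraint by the same monotonicity-under-scaling observation (stated just before \eqref{eq:sphere}) and then simply invokes the classical first-order necessary conditions from the cited optimization literature. Your additional steps (verifying strict increase of $\log\det$ under the scaling and checking the constraint qualification via the nonvanishing gradient of the power constraint on $\calS$) only make explicit what the paper leaves implicit.
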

Theorem \ref{fonc} is also known as the \ac{kkt} conditions. By plugging \eqref{eq:lag} into \eqref{eq:lagder}, we have
\begin{equation}
    \nabla \cL(\bW_\opt,\lambda_\opt) = -\nabla f(\bW_\opt;\alpha) + 2\lambda_\opt\bW_\opt = \bzero.
    \label{eq:kkt}
\end{equation}

The principle of the Riemannian gradient method is similar to its Euclidean version. At the $t$-th iteration, the Riemannian gradient of $f$ at $\bW_t$ is given as\footnote{Note that we use $\grad$ to denote the Riemannian gradient while $\nabla$ indicates the Euclidean gradient.}
\begin{equation}
    \grad f(\bW_t; \alpha) = \proj{\bW_t}{\nabla f(\bW_t;\alpha)},
\end{equation}
where $\proj{\bW_t}{\cdot}$ is the projection operator that projects a vector in the Euclidean space onto the tangent space of $\calS$ at point $\bW_t$. The update function moves $\bW_t$ along the direction of $\grad f(\bW_t; \alpha)$ or its variants to the next point $\bW_{t+1} \in \calS$, which is performed by either exponential map or retraction, and the step size can be determined by any line search approach\cite{absil2009optimization}. In particular, for the complex hypersphere, the projector is defined as
\begin{equation}
    \proj{\bW}{\bV} = \bV - \re{\tr{\bW\herm\bV}}\bW.
\end{equation}
When $\grad f(\bW;\alpha)$ converges to $\bzero$, we have
\begin{equation}
    \nabla f(\bW;\alpha) - \re{\tr{\bW\herm\nabla f(\bW;\alpha)}}\bW = \bzero,
    \label{eq:gradf}
\end{equation}
and notice that it satisfies the \ac{kkt} conditions. Hence, we have the following corollary:
\begin{corollary}
If the Riemannian gradient of $f$ at $\bW_\opt$ vanishes, i.e., $\grad f(\bW_\opt; \alpha)=\bzero$, $\bW_\opt$ satisfies Theorem~\ref{fonc} with the Lagrangian multiplier being
\begin{equation}
    \lambda_\opt = \frac{1}{2}\re{\tr{\bW_\opt\herm\nabla f(\bW_\opt; \alpha)}}.
\end{equation}
\end{corollary}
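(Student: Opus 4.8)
The plan is to prove the corollary by a direct comparison of the vanishing Riemannian gradient condition with the Euclidean \ac{kkt} stationarity equation~\eqref{eq:kkt}, since both are assertions relating $\nabla f(\bW_\opt;\alpha)$ and $\bW_\opt$. No new machinery is needed: the entire content is an identification of the correct Lagrangian multiplier, followed by a check that feasibility is automatic on the manifold.

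First I would invoke the hypothesis $\grad f(\bW_\opt;\alpha)=\bzero$. By the definition $\grad f = \proj{\bW_\opt}{\nabla f}$ together with the explicit form of the projector $\proj{\bW}{\bV} = \bV - \re{\tr{\bW\herm\bV}}\bW$ on the complex hypersphere, this hypothesis is exactly~\eqref{eq:gradf}, namely
\[
\nabla f(\bW_\opt;\alpha) - \re{\tr{\bW_\opt\herm\nabla f(\bW_\opt;\alpha)}}\bW_\opt = \bzero .
\]
I would then rearrange~\eqref{eq:kkt}, the stationarity part of Theorem~\ref{fonc}, which reads $-\nabla f(\bW_\opt;\alpha) + 2\lambda_\opt\bW_\opt = \bzero$. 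Substituting the candidate multiplier $\lambda_\opt = \tfrac{1}{2}\re{\tr{\bW_\opt\herm\nabla f(\bW_\opt;\alpha)}}$ into this expression yields precisely the negation of the displayed identity above, so the two equations are equivalent and the stationarity condition~\eqref{eq:lagder} holds with this choice of $\lambda_\opt$.

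Next I would dispatch the feasibility condition $\tr{\bW_\opt\bW_\opt\herm}-P=0$. This is immediate: the Riemannian gradient is only defined at points of $\calS$, so $\bW_\opt\in\calS$ by assumption, and $\calS$ in~\eqref{eq:sphere} is defined exactly by this equality. Hence both conditions of Theorem~\ref{fonc} are met.

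The only point requiring a word of care is that $\lambda_\opt$ be an admissible \emph{real} multiplier; this is guaranteed because $\re{\cdot}$ extracts a real scalar from $\tr{\bW_\opt\herm\nabla f(\bW_\opt;\alpha)}$. Consequently there is no genuine obstacle here—the result follows by matching coefficients—so the statement is essentially a restatement of~\eqref{eq:gradf} in \ac{kkt} form, which is why it is phrased as a corollary rather than a theorem.
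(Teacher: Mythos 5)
Your proposal is correct and follows exactly the paper's own (one-line) argument: the paper likewise proves the corollary by comparing \eqref{eq:kkt} with \eqref{eq:gradf} and reading off $\lambda_\opt$. Your additional remarks on feasibility ($\bW_\opt\in\calS$) and on $\lambda_\opt$ being real are sensible bookkeeping the paper leaves implicit, but they do not constitute a different approach.
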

\begin{proof}
By comparing \eqref{eq:kkt} and \eqref{eq:gradf} the result is obvious.
\end{proof}

Considering the \ac{srgd} for \eqref{PM}, the following theorem\cite{stochriem} guarantees its convergence with high probability:
\begin{theorem}\label{theo2}
Suppose that the \ac{srgd} applied on \eqref{PM} generates a sequence of points $\{\bW_t\}_{t>0}$ and a sequence of step sizes $\{\gamma_t\}_{t>0}$ satisfying $\sum \gamma_t^2<\infty$ and $\sum \gamma_t = +\infty$. The resulting sequence of the Riemannian gradients of $f$ converges \ac{as}, i.e., $\grad f(\bW_t;\alpha)\rightarrow \bm{0}$ \ac{as}. This holds true for either exponential map or retraction that is used as the update function.
\end{theorem}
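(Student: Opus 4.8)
The plan is to read Theorem~\ref{theo2} as a direct instance of the almost-sure convergence theorem for stochastic gradient descent on Riemannian manifolds in \cite{stochriem}, and to spend the proof verifying that \eqref{PM} satisfies that theorem's hypotheses. Writing the maximization of $\hf$ as the minimization of $-\hf$ (so the ascent steps become descent steps), those hypotheses amount to: (i) the feasible set is a connected, geodesically complete Riemannian manifold; (ii) the cost is smooth with a Riemannian gradient that is bounded and Lipschitz over the iterates; (iii) the search direction used at each step is an unbiased estimator of the true Riemannian gradient with uniformly bounded second moment; and (iv) the gains obey the Robbins--Monro conditions $\sum\gamma_t^2<\infty$ and $\sum\gamma_t=+\infty$. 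Condition (iv) is assumed, so the whole task is to establish (i)--(iii).

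First I would settle the geometry. Identifying $\CC^{\nt\times\ns}$ with $\RR^{2\nt\ns}$, the set $\calS$ in \eqref{eq:sphere} is the Euclidean sphere of radius $\sqrt{P}$; it is therefore compact, connected, and (by Hopf--Rinow) geodesically complete, which gives (i). Compactness then drives (ii). The key observation is that every matrix sitting inside a $\log\det$ in \eqref{PM} is uniformly positive definite and bounded on $\calS$: the regularizers $\bC_{\bxi}^{-1}\succ\bzero$ and $2\bm{I}\succ\bzero$ keep each argument away from singularity, while the Gram-type terms $\bS_m^N(\bW)$ and $\bW\herm\bK_m^N\bW$ are positive semidefinite and, as continuous images of the compact $\calS$, bounded above. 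Hence each $\log\det$ and all of its derivatives are continuous on $\calS$, so both $f$ and each realized $\hf$ are smooth there; compactness then makes the Riemannian gradient $\grad f(\bW;\alpha)$ bounded and Lipschitz, yielding (ii).

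Condition (iii) is where I expect the real difficulty. Since the tangent-space projector $\proj{\bW}{\cdot}$ is linear at a fixed $\bW$, unbiasedness of the stochastic Riemannian gradient reduces to $\expc{\bxi}{\nabla\hf(\bW;\alpha)}=\nabla f(\bW;\alpha)$, and the uniform bounds from (ii) give the second-moment bound $\expc{\bxi}{\|\grad\hf(\bW;\alpha)\|^2}<\infty$. The obstacle is that the objective is a $\log\det$ \emph{composed with} the sample means $\bS_m^N$ and $\bK_m^N$, so differentiation does not commute with the expectation over $\bxi$ and the finite-sample gradient carries a bias of order $O(1/N)$; an exactly unbiased estimate of $\nabla f$ would require an unbiased estimate of the \emph{inverse} of a sample mean, which the naive plug-in does not provide. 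I would address this either by letting the batch size $N$ grow (or simply be large) so that the residual bias is negligible, or by splitting the direction into the true gradient, a deterministic bias that vanishes, and a martingale-difference noise term, and invoking the variant of \cite{stochriem} that tolerates an asymptotically negligible bias. With (i)--(iv) in force, the cited theorem then gives $\grad f(\bW_t;\alpha)\rightarrow\bzero$ almost surely.

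It remains to cover retractions alongside the exponential map. For this I would use that the normalization map $R_{\bW}(\bV)=\sqrt{P}\,(\bW+\bV)/\|\bW+\bV\|$ is the metric-projection retraction on the sphere, hence a second-order retraction agreeing with $\exp_{\bW}$ up to third order in the step length. Combined with the boundedness of $\grad f$ on the compact $\calS$ and the summability $\sum\gamma_t^2<\infty$, the accumulated discrepancy between the retraction-based and exponential-map-based trajectories is almost surely finite and cannot alter the limit, exactly as in the retraction extensions of the analysis in \cite{stochriem}. This delivers the claim for both update maps and completes the proof.
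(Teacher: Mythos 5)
The paper's entire proof of Theorem~\ref{theo2} is the citation ``See~\cite{stochriem}'', so your proposal follows the same overall route---instantiating the almost-sure convergence theorem for stochastic gradient descent on Riemannian manifolds---but actually performs the hypothesis-checking that the paper omits. Your treatment of the geometry and regularity is sound: $\calS$ is a compact sphere (connected, geodesically complete), and the regularizers $\bC_{\bxi}^{-1}$ and $2\bm{I}$ keep every $\log\det$ argument uniformly positive definite while compactness bounds the Gram terms from above, so $f$ and $\hf$ are smooth with bounded, Lipschitz Riemannian gradients. The most valuable part of your write-up is the point the paper passes over in silence: because the objective of \eqref{PM} is a $\log\det$ composed with the sample means $\bm{S}_m^N(\bW)$ and $\bm{K}_m^N$, the plug-in direction $\grad\hf(\bW;\alpha)$ is \emph{not} an unbiased estimator of $\grad f(\bW;\alpha)$, whereas unbiasedness of the search direction is a hypothesis of the cited theorem; the conclusion $\grad f(\bW_t;\alpha)\rightarrow\bzero$ almost surely, stated for the gradient of the \emph{true} objective of \eqref{P1}, therefore does not follow from the bare citation. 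Your proposed repairs (a growing or large batch so the $O(1/N)$ bias is asymptotically negligible, or a decomposition into true gradient plus vanishing bias plus martingale noise) are the right ones, but you leave them as two alternative sketches rather than carrying either out, so your argument is not yet complete on this point---though the gap it exposes belongs to the paper as much as to your proof. Your handling of retractions versus the exponential map is consistent with how \cite{stochriem} extends the result and requires no change.
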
 
\begin{proof}
See\cite{stochriem}.
\end{proof}

The step size assumption in Theorem \ref{theo2} is widely used as a sufficient condition for the convergence analysis in stochastic programming\cite{eon1998online}. In simulations, we employ an adaptive line search method\cite{boumal2023intromanifolds} to determine the step size at each iteration. Although it is challenging to verify the validity of the assumption on adaptive step size, our numerical results demonstrate that the proposed method can still converge reliably and rapidly.

Based on the above analyses, we can conclude that the solution of \ac{srgd} will converge to a \ac{kkt} point of \eqref{P1} satisfying Theorem \ref{fonc} \ac{as}. In addition, the stochastic nature enables the use of a small batch size of points sampled at each iteration, thereby reducing the computational complexity.

\section{Numerical Results}
Throughout the simulations, the system settings are chosen as $\nt = \nr = 8$, $\ns = 3$, $f_0=15$ kHz and a rectangular $128\times 14$ \ac{ofdm} grid is allocated for data transmission, i.e., $n_m=0\dots 127$, $k_m=0\dots 13$. We perform the Monte Carlo approaches, and at each simulation the channel parameters $\bxi$ are randomly generated. In particular, we set the number of propagation paths to $3$ and each parameter component is sampled according to the following rules: for all $l$, $b_l$ is sampled from the complex Gaussian distribution with zero mean and unit variance; the path distance is sampled uniformly from $10$ to $800$~meters based on which $\tau_l$ is calculated; the relative velocity is sampled uniformly from $0$ to $80$~m/s on which $f_{D,l}$ is calculated; $\phi_l$ and $\theta_l$ are both sampled uniformly from $-90^\circ$ to $90^\circ$; all components in $\bxi$ are independent of each other. The sampled parameter vector is also used as the prior mean $\bmu_{\bxi}$ and the prior covariance $\bC_{\bxi}$ is a diagonal matrix with the standard deviations for each type of parameter as follows: $10^{-1}$ for path gains, $10^{-7}$ for path delays, $50$ for Doppler shifts, $10^{-1}$ for \acp{aoa} and \acp{aod}. The scaling matrix $\bJ$ in $f_s(\bW)$ is a diagonal matrix with diagonal elements being $1/f_0$ and $f_0$ for the delay and Doppler parts respectively, and $1$ otherwise, in order to scale the values to a similar order and avoid numerical instability. 

\begin{figure}[h]
    \centering
    \includegraphics[width=.48\textwidth]{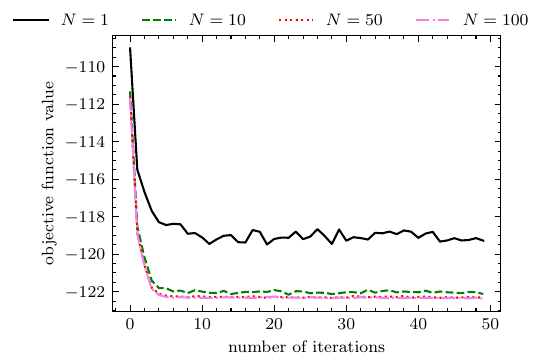}
    \caption{Convergence of $-\hf(\bW;\alpha)$ for different numbers of samples per iteration.}
    \label{fig:objcvg}
\end{figure}

We first fix the trade-off factor $\alpha$ to $0.5$ and perform the \ac{srcg} method, a variant of \ac{srgd}, on \eqref{PM} for different numbers of samples per iteration $N$. The maximum number of iterations is set to $50$ and the convergence of objective values averaged over all simulations are plotted in Fig.~\ref{fig:objcvg}. It shows that \ac{srgd} converges quickly within $10$ iterations. As $N$ becomes larger, the curve converges to a lower value, but the performance improvement is not significant as $N>10$. To validate the gradient convergence, the evolution of the Riemannian gradient norms during optimization is illustrated in Fig.~\ref{fig:graddcvg}. In addition, their standard deviations over simulations are drawn in shadow. It turns out that for a larger $N$, the Riemannian gradient can converge to $\bzero$ \ac{as}. The above results highlight the remarkable performance of \ac{srgd} in practice, especially its fast convergence and low demand for sample size per iteration.


\begin{figure}[h]
    \centering
    \includegraphics[width=.48\textwidth]{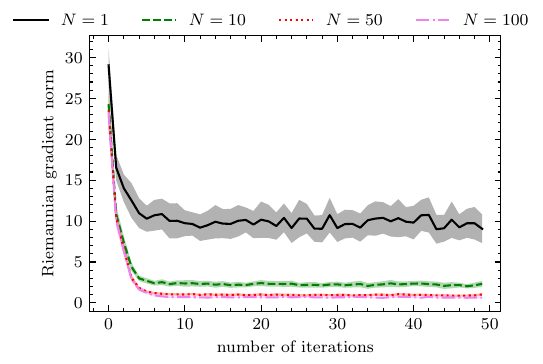}
    \caption{Riemannian gradient norm convergence for different numbers of samples per iteration. The standard deviations over all simulations are drawn in shadow.}
    \label{fig:graddcvg}
\end{figure}

Based on the optimized precoders, we calculate the values of $\hf_s(\bW)$ and $\hf_c(\bW)$ on $100$ samples and present their trade-off in Fig.~\ref{fig:trade-off} by varying the factor $\alpha$. The curves exhibit a similar behavior to the rate region for two-user \ac{mimo} systems\cite{biglieri2007mimo}, and their interior regions correspond to the achievable \ac{isac} performance. In contrast to the case of designing waveform for communications and sensing separately, in which the curve will become a straight line connecting the achievable maximum $\hf_s(\bW)$ and maximum $\hf_c(\bW)$\cite{xiong2022fundamental}, the Pareto boundary of the curves in Fig.~\ref{fig:trade-off} toward the upper right corner highlights the integration benefits gained by joint design.

\begin{figure}[h]
    \centering
    \includegraphics[width=.48\textwidth]{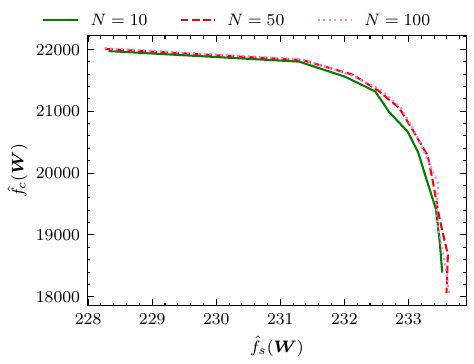}
    \caption{Trade-off between the performance of communications and sensing while varying $\alpha$ from $0$ to $1$.}
    \label{fig:trade-off}
\end{figure}

\section{Conclusion}
In this work, we address the linear precoder design problem for \mimoofdm \ac{isac} systems by optimizing the \ac{bfim} of channel parameters and data symbols. The proposed \ac{srgd} method is shown to guarantee a fast convergence with high probability. In addition, the derived objective function can be decomposed into two parts that correspond to the respective communication and sensing criteria., thereby making it possible to control the performance trade-off through a single factor. Furthermore, the performance achieved by the designed precoder is shown to be characterized by a Pareto boundary, indicating the benefits of joint system design. This observation motivates us to envision the \ac{bfim} as a potential unified metric for \ac{isac} and we leave the further investigation and study as our future work.

\bibliographystyle{IEEEtran}
\bibliography{IEEEabrv,mybib}

\end{document}